\newcommand{\IGNORE}[1]{}
\let\alpha\upalpha
\newcommand{\MinShare}{\mathtt{MinShare}}
\newcommand{\MaxProd}{\mathtt{MaxProd}}
\newtheorem{theorem}{Theorem}
\newtheorem{lemma}{Lemma}
\newtheorem{proposition}{Proposition}
\newtheorem{remark}{Remark}
\newtheorem{assumption}{Assumption}
\newtheorem{definition}{Definition}
\newcommand{\set}[1]{ \{ #1 \} }
\title{Optimizing Contracts in Principal-Agent Team Production}
\author{Shiliang Zuo \\
University of Illinois Urbana-Champaign \\
\texttt{szuo3@illinois.edu}
}
\date{}
\begin{document}
\maketitle

\begin{abstract}

We study a principal-agent team production model. The principal hires a team of agents to participate in a common production task. The exact effort of each agent is unobservable and unverifiable, but the total production outcome (e.g. the total revenue) can be observed. The principal incentivizes the agents to exert effort through contracts. Specifically, the principal promises that each agent receives a pre-specified amount of share of the total production output. The principal is interested in finding the optimal profit-sharing rule that maximizes her own utility. We identify a condition under which the principal's optimization problem can be reformulated as solving a family of convex programs, thereby showing the optimal contract can be found efficiently. 



\end{abstract}

\section{Introduction}
In principal-agent problems with moral hazard, the principal must incentivize the agent to complete certain tasks, whilst the exact effort of the agent is unobservable and unverifiable (\cite{holmstrom1979moral}). The principal must incentivize the agent through contracts, which are essentially performance-based pay schemes. In many principal-agent problems, instead of contracting with a single agent, the principal must contract with a team of agents (\cite{holmstrom1982moral}). As an example, consider a firm which produces a certain product, e.g., cars. There are different people in charge of different processes, including research, engineering, marketing, quality assurance, etc. Everyone is essential in the process. What is the appropriate way for the principal to motivate this team of agents? 

Since the effort of each agent is unverifiable, the principal must post contracts, i.e., performance-based pay schemes, to incentivize the agents. However, instead of directly observing the performance of each agent, it is possible that the principal can only observe the ``aggregated performance". In the car production example, the principal can only observe the total revenue generated from the production process, rather than directly measure the performance of each agent. Therefore, the principal can only post contracts that depend on the total production rather than individual performances. Then, essentially, the principal must find a way to share the total production to motivate the team to exert effort and in turn maximize her own utility. 

Once the principal decides on a way to share the total revenue, how will the agents respond? Since the total production may depend on complex interactions of each agent's effort, each agent's response not only depends on his own share but also depends on the shares of other agents. Formally, the teams' response should form a Nash equilibrium, and could potentially depend on the contract in complex ways. In the following, we will walkthrough a specific example illustrating the problem and our main approach. 

\subsection{An Example}
Consider the following production setting with two agents. Each agent $i\in\set{1,2}$ can supply an effort $a_i\in[0,+\infty)$, which is also his private cost. The production function $f(a_1, a_2) = 3 {a_1^{1/3}a_2^{1/3}} $, i.e., a Cobb-Douglas production with decreasing return to scale. The principal can post linear contracts $\beta = (\beta_1, \beta_2)$. Under the contract $\beta$, the share for agent $i$ is $\beta_i$, and the principal receives share $(1 - \beta_1 - \beta_2)$.

Assuming the agents' are risk neutral, in the Nash equilibrium, each agent should maximize his expected utility. Expressing the agent's response $a_i$ as a function of the contract $\beta$, then
\[
a_i(\beta) = \arg\max_{a_i} \beta_i f(a_i,a_{-i}) - a_i. 
\]

Excluding the degenerate equilibrium $(0,0)$, it can be checked the Nash equilibrium is
\[
a_1 = \beta_1^2 \beta_2, a_2 = \beta_2^2 \beta_1. 
\]
Then the production, expressed as a function of $\beta$, is $f(a(\beta)) = 3 \beta_1 \beta_2$. Hence, the principal's expected utility when posting contract $\beta$ is her own share multiplied by the production outcome
\[
(1 - \beta_1 - \beta_2) (3\beta_1 \beta_2)
\]
One would hope that the objective is concave so that the principal can apply some convex optimization algorithm and maximize the objective. However, it can checked the above objective is not concave, therefore at least the principal can not naively apply some convex optimization algorithm and hope the global optimum can be found. An important observation is that though the utility is not concave, the production $f(a(\beta)) = 3\beta_1\beta_2$ is \emph{quasiconcave} with respect to $\beta$. 

How does this help us? Instead of optimizing the utility directly, consider the following program
\[
\min \beta_1 + \beta_2, \quad \text{s.t. } f(a(\beta)) = 3\beta_1 \beta_2 \ge k. 
\]
The meaning of the above program is clear: it is the minimum total share that must be distributed to the agents, subject to the constraint that the production must meet a certain threshold $k$. Since $f(a(\beta))$ is quasiconcave, the constraint forms a convex set, and therefore the above program is a convex program. 
The principal can then find her optimal utility via a two-stage process. In the first stage, the principal finds the optimal utility at production level $k$ using the above convex program. In the second stage, the principal finds the $k$ achieving the overall maximum utility. Essentially, the principal's original optimization problem, which is non-convex, can be reformulated as solving a family of convex programs. 

Another way to reformulate the principal's problem is to maximize the production level while constraining the total share distributed. 
\[
\max f(a(\beta)) = 3 \beta_1 \beta_2, \quad \text{s.t.} \beta_1 + \beta_2 \le k. 
\]
The above program is a constrained quasi-convex program, and can also be solved efficiently. Again, the principal can solve his original problem via a two-stage process. In the first stage, the principal finds the optimal utility when the total distributed share is $k$. In the second stage, the principal finds the total distributed share $k$ which maximizes her utility. 

\subsection{Summary and Outline}
In the example given above, the two reformulations both relied on a key quantity of interest $f(a(\beta))$, which is the production output expressed as a function of the contracts. We will term this the ``induced production function". The two reformulations both relied on the fact that the induced production function is quasiconcave. Note however that this is in fact a highly non-trivial property and in general does not hold. After introducing the basic model in \Cref{sec:team}, we identify a technical condition that guarantees the quasiconcavity of the induced production function in \Cref{sec:condition}. Then in \Cref{sec:reformulation}, we show how the principal's optimization problem can be reformulated as solving a family of convex programs, and also briefly discuss the connection of the reformulations to the concept of the cost function and the indirect production function in economics. 

\szdelete{
Things to cover:
\begin{itemize}
\item Polynomial sample complexity against a myopic agent
\item The SDFC condition makes the problem instance nice
\item Applying the SDFC condition to stochastic or adversarial agents
\item Lower bound for Linear contracts. 
\item Linear contracts with strategic agents
\item General contracts against strategic agents? 
\item Reverse Moral Hazard? 
\end{itemize}
}

\section{Related Work}
Contract theory has been an important topic in economics, dating at least back to the work by \cite{holmstrom1979moral}. More recently, there have been numerous works in the computer science community studying contract theory from a computational or learning perspective. For example, the works by \cite{ho2014adaptive, zuo2024new, collina2024repeated} studies the problem of regret-minimization in repeated principal-agent problems; the works by \cite{dutting2019simple, dutting2021complexity, dutting2022combinatorial, dutting2023multi, duetting2024multi} study the algorithmic tractability of principal-agent problems of a combinatorial nature. 

The team production model with moral hazard was first introduced in the seminal work by~\cite{holmstrom1982moral}. More recently, there are some works that study multi-agent contract design from a computational and algorithmic perspective, e.g., \cite{dutting2023multi}, \cite{castiglioni2023multi}, \cite{deo2024supermodular}, \cite{cacciamani2024multi}. The problem studied in this work continues this line of research. However, the flavor of this work is quite different from all these prior works. Specifically, all these prior works are of a combinatorial nature, where the action space of the agent is assumed finite. In this work, the action space of the agent is assumed to be continuous, and therefore the problem of finding the optimal contract becomes a continuous optimization problem. As such, from a technical perspective, the results in this work are incomparable to these prior works. The continuous effort space model follows more closely with the economics literature. As a result, the approach in this work captures more commonly used production functions in economics, such as the CES family. As a remark, the recent work by \cite{zuo2024new} study learning algorithms for contract design under the continuous effort space model, however their work is limited to the single-agent case. 






\section{Team Production Model}
\label{sec:team}
The principal contracts with a group of $n$ agents. When each agent $i$ takes an action $a_i \in [0, \infty)$, the production is $f(a)$, where $a$ denotes vector $a = (a_1,\dots, a_n)$.  

The principal can write linear contracts specified by a $n$-tuple $\beta = (\beta_1, \dots, \beta_n)$. Here, the $i$-th component $\beta_i$ denotes agent $i$'s share under contract profile $\beta$. Assuming the agents are risk-neutral, the utility of agent $i$ under contract profile $\beta$ and the teams action $a$ is then:
\[
\beta_i f(a) - a_i. 
\]
The agents will respond with the Nash equilibrium action profile $a = (a_1, \dots, a_n)$ that satisfies the following:
\[
a_i \in \arg\max_{a'_i} \beta_i f(a_{-i}, a'_i) - a'_i. 
\]

We restrict attention to the equilibrium that satisfies the first-order conditions $\beta_i \partial_i f(a) = 1, \forall i \in [n]$. To do so, we impose the Inada conditions, which is a relatively standard condition. 

\begin{definition}
The production function $f(a) : (\mathbb{R^+})^n \rightarrow \mathbb{R}$ is said to satisfy the Inada condition if the following holds. 
\begin{enumerate}
\item $f$ is concave on its domain. 
\item $\lim_{a_i \rightarrow 0} \partial f(x) / \partial a_i = +\infty$. 
\item $\lim_{a_i \rightarrow +\infty} \partial f(x) / \partial a_i = 0$. 
\end{enumerate}
\end{definition}

\begin{proposition}
\label{prop:nasheq}
For any $\beta$, there exists a unique equilibrium satisfying the first-order conditions:
\begin{align*}
\beta_i \partial_i f(a) = 1. 
\end{align*}
\end{proposition}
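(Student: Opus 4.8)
The plan is to recast the coupled first-order conditions as the stationarity condition of a \emph{single} concave potential, and then argue existence and uniqueness of its maximizer. Observe that the system $\beta_i \partial_i f(a) = 1$ for all $i$ is equivalent to $\partial_i f(a) = 1/\beta_i$, i.e. $\nabla f(a) = (1/\beta_1, \dots, 1/\beta_n)$. Introduce the auxiliary function $\Phi(a) = f(a) - \sum_{i} a_i/\beta_i$. Since $f$ is concave by the first Inada condition and the subtracted term is linear, $\Phi$ is concave, and $\nabla \Phi(a) = 0$ holds exactly when the first-order conditions hold. Because $\Phi$ is concave, its stationary points coincide with its global maximizers. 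Thus the proposition reduces to the single statement that $\Phi$ admits a unique maximizer, attained in the interior of the domain. (This potential structure is the multi-agent analogue of each agent's own-action objective being concave, so the common maximizer is simultaneously a best response for every agent.)

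For interiority and existence I would use the remaining two Inada conditions. Interiority follows from condition~2: near the face $\{a_i = 0\}$ we have $\partial_i f \to +\infty > 1/\beta_i$, so $\partial_i \Phi > 0$ there, ruling out any maximizer on the boundary; hence every maximizer is interior and solves the first-order system. For existence I would show $\Phi$ is coercive using condition~3: along any outward ray the marginal product $\partial_i f \to 0$ while the marginal cost $1/\beta_i$ stays fixed and positive, so $\Phi$ is eventually strictly decreasing in every direction heading to infinity. Consequently the superlevel sets of $\Phi$ are bounded (and closed, by continuity), so $\Phi$ attains its maximum on a nonempty compact convex set.

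The step I expect to be delicate is converting the coordinatewise Inada limits into genuine global statements: that $\partial_i f(a) < 1/\beta_i$ for all large $a_i$ \emph{uniformly} in the remaining coordinates, so that the superlevel sets are truly bounded. Here I would lean on the monotonicity of $\nabla f$ implied by concavity (the map $a \mapsto \nabla f(a)$ is monotone, so each $\partial_i f$ is nonincreasing in $a_i$) to pin the maximizer inside a fixed box $[\eps, M]^n$ and then continuity finishes existence.

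Uniqueness is the real obstacle. Plain concavity only guarantees that the set of maximizers of $\Phi$ is convex, which does not by itself force a single point: if $\nabla f$ were constant along a segment whose endpoints both satisfy $\nabla f = (1/\beta_i)_i$, there would be a continuum of first-order profiles. To exclude this I would invoke strict concavity of $f$ on the interior, which I take to be the intended reading of the concavity hypothesis and which is consistent with the strictly diminishing marginal-product story underlying the Inada conditions; strict concavity makes $\Phi$ strictly concave and hence its maximizer unique. Combined with the previous paragraphs, this yields a unique interior action profile satisfying $\beta_i \partial_i f(a) = 1$ for all $i$.
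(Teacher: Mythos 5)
Your proposal is correct in substance and your uniqueness argument coincides with the paper's, but your existence argument takes a genuinely different route. The paper works directly on the gradient system $\nabla f(a) = (1/\beta_1, \dots, 1/\beta_n)$: uniqueness from strict concavity of $f$ (exactly as you do --- and your observation that the stated hypothesis ``concave'' must secretly mean \emph{strictly} concave is vindicated, since the paper's own proof invokes strict concavity), and existence from the Poincar\'e--Miranda theorem, using the two Inada limits to produce the sign changes of $\beta_i \partial_i f(a) - 1$ on opposite faces of a box. You instead pass through the concave potential $\Phi(a) = f(a) - \sum_i a_i/\beta_i$ and argue variationally: coercivity gives a maximizer, the Inada limit at $0$ pushes it into the interior, and stationarity recovers the first-order system. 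Your route buys a conceptual bonus the paper's does not: it exhibits the Nash equilibrium as the maximizer of a single potential, reducing everything to standard facts about strictly concave maximization, whereas the topological argument is shorter but gives no variational characterization. One caveat on the step you yourself flag as delicate: the patch you propose does not work as stated. Concavity makes $\partial_i f$ nonincreasing in $a_i$ but says nothing about its dependence on $a_j$ for $j \ne i$ --- cross-partials can be positive (for Cobb--Douglas, $\partial_i f$ is \emph{increasing} in the other inputs) --- so the coordinatewise limit $\partial_i f \to 0$ need not be uniform and superlevel sets of $\Phi$ can genuinely be unbounded under the literal hypotheses: $f(a_1, a_2) = \sqrt{a_1 a_2}$ satisfies all three stated Inada conditions, yet $\Phi$ grows linearly along the diagonal for suitable $\beta$ and the system $\nabla f(a) = (1/\beta_1, 1/\beta_2)$ has no (or a continuum of) solutions. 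Coercivity requires some decreasing-returns strengthening, e.g. $f(ta)/t \to 0$ as $t \to \infty$ in every direction of the positive orthant. You should not feel singled out by this: the paper's Poincar\'e--Miranda argument needs exactly the same uniform sign condition on the whole face $a_i = M$ of the box, so the gap is shared with the published proof rather than introduced by your approach; yours has the merit of making it visible.
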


\begin{definition}
Given a contract $\beta$, define the induced production function $F(\beta) := f(a(\beta))$ as the production $f(a)$ when $a$ is the unique equilibrium satisfying the first-order conditions given contract $\beta$. 
\end{definition}

The induced production function should be viewed as a function of $\beta$. By itself, the above definition is not too surprising. However, as the following will show, this quantity will be the key in analyzing the principal's optimization problem. When the contract is $\beta$, the principal's utility is then
\[
(1 - \sum_i \beta_i) f(a(\beta)). 
\]

\section{Quasiconcavity of the Induced Production Function}
\label{sec:condition}

The main goal of this subsection is to provide a sufficient condition (namely, \Cref{assump:AdditiveForQuasiConcave} below) for which the induced production function $f(a(\beta))$ is quasiconcave.

\begin{assumption}
\label{assump:AdditiveForQuasiConcave} The production function $f(a)$ satisfies the following. 
\begin{enumerate}
\item $f$ is strongly separable so that there exists functions $h, g_{1:n}$ such that 
\[
f(a) = h(\sum_{i\in[n]} g_i(a_i)).
\]
Here $h$ is a monotonic transformation, meaning
\[
\sum_{i\in[n]} g_i(a_i) > \sum_{i\in[n]} g_i(b_i) \Rightarrow h(\sum_{i\in[n]} g_i(a_i)) > h(\sum_{i\in[n]} g_i(b_i)). 
\]
\item For each $g_i$, the function
\[
y_i(\cdot) = g_i\circ (1/g'_i)^{-1} (\cdot)
\]
is well-defined, strictly increasing, and concave. 
\end{enumerate}
\end{assumption}

In the following, denote $g(\beta) = \sum_{i\in[n]} g_i(a_i(\beta))$. Then the production $f$, as a function of $\beta$, can be expressed as $f(a(\beta)) = h(g(\beta))$. The idea will be to show that $g(\beta)$ is quasiconcave, in particular, its upper-level sets are convex.

\szcomment{Fix typo: f' should be h' in the following}
\begin{lemma}
$g(\beta) = \sum_{i\in[n]} y_i (\beta_i h'(g(\beta)))$. 
\end{lemma}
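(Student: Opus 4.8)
The plan is to read off the claimed identity directly from the first-order characterization of the equilibrium guaranteed by \Cref{prop:nasheq}, combined with the strongly separable structure in \Cref{assump:AdditiveForQuasiConcave}. The starting point is the first-order condition $\beta_i \partial_i f(a) = 1$ evaluated at the equilibrium $a = a(\beta)$. First I would expand $\partial_i f$ using the chain rule on the representation $f(a) = h(\sum_{j\in[n]} g_j(a_j))$: since only the $i$-th summand depends on $a_i$, we get $\partial_i f(a) = h'(\sum_{j} g_j(a_j)) \, g_i'(a_i) = h'(g(\beta))\, g_i'(a_i)$, where the last equality is just the definition $g(\beta) = \sum_j g_j(a_j(\beta))$. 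Substituting into the first-order condition yields $\beta_i\, h'(g(\beta))\, g_i'(a_i) = 1$, which rearranges to $1/g_i'(a_i) = \beta_i\, h'(g(\beta))$.

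The second step is to invert this relation and recognize $y_i$. Viewing $1/g_i'$ as a function of its argument, I would apply its inverse $(1/g_i')^{-1}$ to both sides of $1/g_i'(a_i) = \beta_i h'(g(\beta))$, giving $a_i = (1/g_i')^{-1}\!\big(\beta_i h'(g(\beta))\big)$. Applying $g_i$ to both sides and using the definition $y_i = g_i \circ (1/g_i')^{-1}$ then gives the per-agent identity $g_i(a_i) = y_i\!\big(\beta_i h'(g(\beta))\big)$. Summing this over $i \in [n]$ and invoking $g(\beta) = \sum_{i} g_i(a_i(\beta))$ produces exactly $g(\beta) = \sum_{i\in[n]} y_i(\beta_i h'(g(\beta)))$, which is the claim. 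Note this is a fixed-point-style identity: $g(\beta)$ appears on both sides, but that is acceptable since the statement is an equality satisfied by the equilibrium value, not a closed form.

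The step requiring the most care is the well-definedness of $(1/g_i')^{-1}$ and the verification that its argument lies in the correct domain, which is precisely what is postulated in part 2 of \Cref{assump:AdditiveForQuasiConcave} (that $y_i$ is well-defined). I would note that invertibility of $1/g_i'$ is underwritten by the Inada conditions, which force $g_i'$ to be strictly monotone so that $a_i \mapsto 1/g_i'(a_i)$ is injective, and that $h' > 0$ follows from $h$ being a strictly increasing monotonic transformation, so that $\beta_i h'(g(\beta))$ is a legitimate input to $(1/g_i')^{-1}$. Apart from these domain checks, the argument is an entirely mechanical substitution, so I do not anticipate a genuine obstacle; the content of the lemma is really the bookkeeping that packages the equilibrium condition into the $y_i$ form that the subsequent quasiconcavity argument will exploit.
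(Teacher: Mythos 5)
Your proposal is correct and follows essentially the same route as the paper's own proof: derive $1/g_i'(a_i) = \beta_i h'(g(\beta))$ from the first-order conditions under the separable form $f = h(\sum_j g_j)$, apply $y_i = g_i \circ (1/g_i')^{-1}$ to obtain $g_i(a_i) = y_i(\beta_i h'(g(\beta)))$, and sum over $i$. Your explicit inversion step and domain checks (invertibility of $1/g_i'$, positivity of $h'$) are slightly more careful than the paper's presentation, but the argument is the same.
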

\proof
In this proof denote $a = a(\beta)$. Since $a$ forms an equilibrium and the first-order conditions are met: 
\[
\beta_i = \frac{1}{h'( \sum_i g_i(a_i) )} \cdot \frac{1}{g'_i(a_i)}. 
\]
Substituting $g(\beta) = \sum_i  g_i(a_i)$ into the above expression and rearranging terms: 
\[
1 / g_i'(a_i) = \beta_i h'(g(\beta))
\]
By definition of $y_i$ and by applying the transformation $y_i$ to both sides in the above expression, we arrive at:
\[
g_i(a_i) = y_i (1/g'_i(a_i)) = y_i(\beta_i h'(g(\beta))). 
\]

Then for any $\beta$, 
\[
g(\beta)  = \sum_i y_i(\beta_i h'( g(\beta) )). \qedhere
\]
\endproof


Fix any $t_0$. The upper-level sets $\set{\beta: g(\beta) \ge t_0}$ are still hard to analyze even with the above lemma since the right-hand side also involves the expression $g(\beta)$. We will relate this set to another set that `removes' $g(\beta)$ from the expression. 

\begin{lemma}
Fix any $t_0$. Denote $S_1 := \set{\beta : g(\beta) \ge t_0}$, $S_2 = \set{\beta : t_0 \le \sum_i y_i(\beta_i h'(t_0))}$. Then $S_1 = S_2$. 
\end{lemma}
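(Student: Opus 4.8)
The plan is to read the two sets through the fixed-point identity from the previous lemma. Writing $\Phi_\beta(t) := \sum_{i} y_i(\beta_i h'(t))$, that lemma says exactly that $g(\beta)$ is a fixed point of $\Phi_\beta$, i.e.\ $g(\beta) = \Phi_\beta(g(\beta))$, and \Cref{prop:nasheq} (uniqueness of the equilibrium) guarantees this fixed point is unique. With this notation $S_1 = \set{\beta : g(\beta) \ge t_0}$ compares $t_0$ against the fixed point of $\Phi_\beta$, while $S_2 = \set{\beta : \Phi_\beta(t_0) \ge t_0}$ compares $\Phi_\beta(t_0)$ against $t_0$. So the identity $S_1 = S_2$ is precisely the statement that, for each fixed $\beta$, $\Phi_\beta$ crosses the diagonal $t \mapsto t$ \emph{from above} at its fixed point: $\Phi_\beta(t) \ge t$ for $t \le g(\beta)$ and $\Phi_\beta(t) \le t$ for $t \ge g(\beta)$.

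Concretely, I would introduce $\psi_\beta(t) := \Phi_\beta(t) - t$, note $\psi_\beta(g(\beta)) = 0$, and reduce everything to showing that $\psi_\beta$ is strictly decreasing. Once that is in hand the proof is a two-line deduction: since $\psi_\beta$ is strictly decreasing with a zero at $g(\beta)$, we have $t_0 \le g(\beta) \iff \psi_\beta(t_0) \ge \psi_\beta(g(\beta)) = 0 \iff \Phi_\beta(t_0) \ge t_0$; the left-hand event is membership in $S_1$ and the right-hand event is membership in $S_2$, giving $S_1 = S_2$. Here I would use that $h$ is a strictly increasing transformation, so $h'(t) > 0$, and that each share satisfies $\beta_i \ge 0$ and each $y_i$ is strictly increasing, which make $\Phi_\beta$ well behaved.

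The main obstacle is the monotonicity claim $\psi_\beta' < 0$, i.e.\ that $\Phi_\beta$ has slope strictly less than $1$. It is tempting to argue that $h'$ is decreasing so that $\Phi_\beta$ itself is decreasing, but this is false in general: in the running example $h(z) = 3e^z$ is convex and $h'$ is increasing, yet $\Phi_\beta(t) = \tfrac13\ln(\beta_1\beta_2) + \tfrac23 t$ still has slope $\tfrac23 < 1$. So the genuine content is that the \emph{concavity of $y_i$} in \Cref{assump:AdditiveForQuasiConcave} forces the composite $\Phi_\beta$ to grow sublinearly relative to the identity, which is what produces the downward crossing. I expect the cleanest route is to differentiate $\Phi_\beta$ and bound $\sum_i y_i'(\beta_i h'(t))\,\beta_i h''(t)$ using the concavity of $y_i$ together with the curvature relation between $h$ and the $g_i$ implied by concavity of $f$.

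If a global derivative bound proves awkward, a softer fallback avoids computing $\psi_\beta'$ everywhere: combine the uniqueness of the fixed point (\Cref{prop:nasheq}) with an asymptotic sign analysis. Since $\psi_\beta$ is continuous and has exactly one zero, it is single-signed on each side of $g(\beta)$; it then suffices to check that $\psi_\beta(t) > 0$ as $t$ approaches the lower end of the range of $g$ and $\psi_\beta(t) < 0$ as $t$ approaches the upper end, after which the intermediate value theorem pins down the signs and hence the downward crossing. Controlling these two limits is where the interplay between the concavity (hence sublinear growth) of the $y_i$ and the growth of $h'$ must be used, and I view establishing this sign behavior as the crux of the argument.
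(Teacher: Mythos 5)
Your framing is correct and insightful: the lemma is exactly the statement that $\psi_\beta(t) := \sum_i y_i(\beta_i h'(t)) - t$ crosses zero downward at its unique zero $g(\beta)$, and your observation that fixed points of $\Phi_\beta$ biject with first-order-condition equilibria (so \Cref{prop:nasheq} gives uniqueness of the zero) is sound. But the proposal has a genuine gap, and your primary route for closing it would fail. You never prove the crossing direction, and the plan you lead with --- showing $\psi_\beta$ is globally strictly decreasing --- is false under \Cref{assump:AdditiveForQuasiConcave}. Take the paper's own CES example with $r<0$, $d\in(0,1)$: there $y_i(u) = -c_i u^{-s}$ with $s = -r/(1-r)\in(0,1)$ and $h'(t) = C(-t)^{d/r-1}$, so writing $\tau = -t$ one computes $\psi_\beta(t) = \tau - K\tau^{\alpha}$ with $K>0$ and $\alpha = (r-d)/(r-1)\in(0,1)$. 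This has derivative in $t$ equal to $K\alpha\tau^{\alpha-1}-1$, which is negative for large $\tau$ but positive for small $\tau$: $\psi_\beta$ decreases through its zero, bottoms out, then increases back toward $0^-$. So the crossing is downward but $\psi_\beta$ is not monotone, and any differentiation-based bound of the form $\Phi_\beta' < 1$ everywhere is unobtainable. That leaves only your fallback (unique zero plus sign analysis at the ends of the range of $g$), and you yourself flag that boundary sign behavior as unverified --- note in the CES example the upper-end limit is $\psi_\beta \to 0^-$, so even there the sign check needs care. As submitted, the crux of the lemma is a stated intention, not a proof.

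It is worth contrasting with the paper's argument, which sidesteps the $t$-direction entirely. The paper fixes $t_0$ and perturbs the \emph{contract}: given $\beta$ with $g(\beta) \ne t_0$, it picks a coordinatewise-comparable $\zeta$ (with $\zeta > \beta$ or $\zeta < \beta$) satisfying $g(\zeta) = t_0$, applies the fixed-point identity at $\zeta$ to get $t_0 = \sum_i y_i(\zeta_i h'(t_0))$, and then compares with $\sum_i y_i(\beta_i h'(t_0))$ term by term using only that each $y_i$ is strictly increasing and $h'(t_0) > 0$. No concavity of $y_i$ and no curvature interplay between $h$ and the $g_i$ is needed for this lemma (concavity enters only in the next lemma, for convexity of $S_2$); what the paper does need, and states informally, is monotonicity and surjectivity of $g$ along increasing directions in $\beta$ so that $\zeta$ exists. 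If you want to salvage your route, the cleanest fix is to import exactly that ingredient: monotonicity of $g$ in $\beta$ pins down the sign of $\psi_\beta$ off its zero without any asymptotic analysis, at which point your two-line deduction goes through.
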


\begin{proof}
We first show $S_2 \subset S_1$. Take any element $\beta\in S_2$, then we need to show $g(\beta) \ge t_0$. For sake of contradiction assume $g(\beta) = t_1 < t_0$. Then
\[
t_1 = \sum y_i (\beta_i h'(t_1)). 
\]
However, there must exists a $\zeta = (\zeta_1, \dots, \zeta_n) > \beta$, such that $g(\zeta) = t_0$. Then
\[
t_0 = \sum y_i( \zeta_i h'(t_0) ) > \sum y_i( \beta_i h'(t_0) ),
\]
which is a contradiction with the fact that $\beta \in S_2$. 

We next show $S_1 \subset S_2$. Take any $\beta \in S_1$, then we need to show
\[
t_0 \le \sum y_i(\beta_i h'(t_0)). 
\]
Since $t_0 \le g(\beta)$, there must exists a contract $\zeta = (\zeta_1, \dots, \zeta_n)$ such that $\zeta < \beta$ and that $g(\zeta) = t_0$, therefore
\begin{align*}
t_0 = \sum_i y_i(\zeta_i h'(t_0)) \le \sum_i y_i(\beta_i h'(t_0)). 
\end{align*}
Therefore $S_1 = S_2$. 
\end{proof}

We next show convexity of the set $S_2$. 
\begin{lemma}
The set $S_2$ (as defined in the previous lemma) is convex. 
\end{lemma}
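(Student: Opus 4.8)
The plan is to recognize $S_2$ as a superlevel set of a concave function and then invoke the standard fact that such sets are convex. The crucial simplification over $S_1$ is that $t_0$ is \emph{fixed}, so the quantity $h'(t_0)$ is merely a constant; write $c := h'(t_0)$. Since $h$ is a monotonic (increasing) transformation we have $c \ge 0$, and the set in question becomes $S_2 = \set{\beta : \sum_i y_i(c\beta_i) \ge t_0}$, which no longer contains the self-referential term $g(\beta)$ that made $S_1$ hard to handle directly.

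Next I would define $G(\beta) := \sum_{i\in[n]} y_i(c\beta_i)$ and argue that $G$ is concave in $\beta$. For each fixed $i$, the map $\beta \mapsto c\beta_i$ is affine, and by \Cref{assump:AdditiveForQuasiConcave} the outer function $y_i$ is concave; the composition of a concave function with an affine map is concave (this holds regardless of the sign of $c$, so positivity of $h'(t_0)$ is not actually needed for this step). Hence each summand $\beta \mapsto y_i(c\beta_i)$ is concave, and a finite sum of concave functions is concave, so $G$ is concave on its domain.

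Finally, $S_2 = \set{\beta : G(\beta) \ge t_0}$ is exactly the $t_0$-superlevel set of the concave function $G$. Superlevel sets of concave functions are convex: if $G(\beta) \ge t_0$ and $G(\beta') \ge t_0$, then for any $\lambda \in [0,1]$ concavity gives $G(\lambda\beta + (1-\lambda)\beta') \ge \lambda G(\beta) + (1-\lambda) G(\beta') \ge t_0$, so $\lambda\beta + (1-\lambda)\beta' \in S_2$. This establishes the claim.

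I do not anticipate a genuine obstacle here; the real work was done by the preceding lemma equating $S_1$ and $S_2$, whose purpose was precisely to remove the occurrence of $g(\beta)$ from the defining inequality. After that, freezing $t_0$ turns $h'(t_0)$ into a constant and reduces the statement to the elementary concavity calculus above. The only two points worth stating with care are that the affine-composition step requires no assumption on the sign of $c$, and that the concavity of each $y_i$ — the nontrivial half of \Cref{assump:AdditiveForQuasiConcave} — is exactly the hypothesis being consumed.
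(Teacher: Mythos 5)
Your proposal is correct and is essentially the paper's own argument: the paper likewise observes that with $t_0$ fixed, both $t_0$ and $h'(t_0)$ are constants, so $S_2$ is a superlevel set of the concave function $\beta \mapsto \sum_i y_i(\beta_i h'(t_0))$, hence convex. Your writeup merely spells out the concave-composed-with-affine and sum-of-concave steps that the paper leaves implicit, which is fine.
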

\begin{proof}
The set $S_2$ is defined by
\[
\set{\beta: t_0 \le \sum_i y_i(\beta_i h'(t_0))}. 
\]
Keeping $t_0$ fixed, notice that both $t_0$ and $h'(t_0)$ are constants in the above expression. Further, the function $y_i$ is concave. This implies the convexity of the set $S_2$. 
\end{proof}

\begin{theorem}
The function $F(\beta) = f(a(\beta))$ is quasiconcave. 
\end{theorem}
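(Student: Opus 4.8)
The plan is to prove quasiconcavity through the standard equivalence: a function is quasiconcave precisely when all of its upper-level sets are convex. So the entire argument reduces to showing that, for every threshold $c$, the set $\set{\beta : F(\beta) \ge c}$ is convex. The two preceding lemmas have already done the heavy lifting for the inner function $g$, so the remaining task is to transfer that work from $g$ to the composition $F = h \circ g$ and then assemble the pieces.

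First I would strip off the monotone transformation $h$. By \Cref{assump:AdditiveForQuasiConcave}, $h$ is a strictly increasing transformation of the aggregate $\sum_i g_i(a_i)$, so it preserves the order of the values of $g$. Consequently, for any threshold $c$ there is a corresponding level $t_0$ (namely $t_0 = \inf\set{t : h(t) \ge c}$, or simply $h^{-1}(c)$ when $h$ is invertible) such that
\[
\set{\beta : F(\beta) \ge c} = \set{\beta : h(g(\beta)) \ge c} = \set{\beta : g(\beta) \ge t_0}.
\]
In other words, each upper-level set of $F$ is exactly an upper-level set $S_1 = \set{\beta : g(\beta) \ge t_0}$ of $g$, for an appropriately chosen $t_0$. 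This step uses only the order-preserving property of $h$ and requires no convexity input.

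Second, I would invoke the two preceding lemmas directly. The first showed that $S_1 = S_2$, where $S_2 = \set{\beta : t_0 \le \sum_i y_i(\beta_i h'(t_0))}$ removes the self-referential occurrence of $g(\beta)$; the second showed that $S_2$ is convex, since for fixed $t_0$ both $t_0$ and $h'(t_0)$ are constants and each $y_i$ is concave, so the defining inequality is a concave constraint. Chaining these, every upper-level set of $F$ equals $S_1 = S_2$, which is convex. Since all upper-level sets are convex, $F$ is quasiconcave, completing the proof.

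I do not expect any serious obstacle in this final step, as the substantive difficulties were handled in establishing $S_1 = S_2$ and the convexity of $S_2$. The only point deserving care is the mapping between a threshold $c$ on $F$ and the level $t_0$ on $g$: one should confirm that $h$ being a strict monotone transformation makes this correspondence well-defined over the entire range of $g$, including edge cases where $h$ may fail to be surjective or differentiable at the boundary. Handling these boundary thresholds cleanly (for instance, by taking the infimum defining $t_0$, and noting that an empty or whole-space upper-level set is trivially convex) is the one spot where I would slow down.
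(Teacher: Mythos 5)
Your proposal is correct and takes essentially the same route as the paper: both arguments rest entirely on the two preceding lemmas (convexity of the upper-level sets of $g$ via $S_1 = S_2$) and then use the monotonicity of $h$ to conclude that $F = h \circ g$ is quasiconcave. Your version merely unrolls the paper's one-line ``increasing function composed with a quasiconcave function'' step into explicit level-set language, with the correspondence $c \mapsto t_0$ made precise where the paper leaves it implicit.
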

\begin{proof}
Note $f(a(\beta)) = h(g(\beta))$. We have shown that $g$ is quasiconcave. $f$ is then a composition of an increasing function with a quasiconcave function, therefore it is quasiconcave. 
\end{proof}


\paragraph{Examples} \Cref{assump:AdditiveForQuasiConcave} is satisfied for the CES production family when the substitution parameter is non-positive. 
\begin{enumerate}
\item Consider the CES production function with a negative substitution parameter. I.e., 
\[
f(a) = (\sum_i k_i a_i^{r})^{\nicefrac{d}{r}}. 
\]
Here $r < 0$ is the substitution parameter, and $d < 1$ is the return to scale. 
Then we can define $h(x) = (-x)^{\nicefrac{d}{r}}$ and $g_i(a_i) = -k_i a_i^{r}$. Then $f(a) = h(\sum g_i(a_i))$. 
\item Consider the Cobb-Douglas production function: $f(a) = \prod_i a_i^{k_i}$, which is a special case of the CES production function as the substitution parameter approaches 0. Taking $h(x) = \exp(x), g_i(a_i) = k_i \ln a_i$, then $f(a) = h(\sum_i g_i(a_i))$. 
\end{enumerate}
In both cases, it can be verified that \Cref{assump:AdditiveForQuasiConcave} holds. 

\medskip
In fact, for the CES production functions, one can obtain a closed-form expression for the induced production function $f(a(\beta))$, which also takes the CES form. For the CES production with negative substitution parameter, the following holds. 
\begin{proposition}
\label{prop:CESform}
Assume $f(a) = (\sum_{i} k_i a_i^r )^{d/r}$. Then the induced production function 
\[
f(a(\beta)) = \left[\sum_i (k_i \beta^r d^r)^{1/(1-r)} \right]^{\frac{r-1}{r} \cdot\frac{d}{d-1}}. 
\]
\end{proposition}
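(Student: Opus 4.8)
The plan is to compute the equilibrium directly from the first-order conditions and then collapse the resulting system by summing. Write $S := \sum_i k_i a_i^r$ so that $f(a) = S^{d/r}$. By the chain rule $\partial_i f(a) = d\, S^{d/r-1} k_i a_i^{r-1}$, so the first-order condition $\beta_i \partial_i f(a) = 1$ reads $\beta_i d\, S^{d/r-1} k_i a_i^{r-1} = 1$. I would solve this for $a_i^{r-1}$, giving $a_i^{r-1} = (\beta_i d k_i)^{-1} S^{1-d/r}$, and then raise both sides to the power $r/(r-1)$ to recover $a_i^r$. The goal of this first stage is not $a_i$ itself but the quantity $k_i a_i^r$, which is exactly the summand appearing in $S$.

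The key intermediate identity is that $k_i a_i^r$ factors into a part depending only on $(\beta_i, k_i, d)$ times a single power of $S$ common to all agents. After simplifying the compound exponent $(1-d/r)\cdot r/(r-1) = (r-d)/(r-1)$ and using $1 - r/(r-1) = 1/(1-r)$, I expect to reach $k_i a_i^r = (k_i \beta_i^r d^r)^{1/(1-r)}\, S^{(r-d)/(r-1)}$. Abbreviating $A := \sum_i (k_i \beta_i^r d^r)^{1/(1-r)}$ and summing this identity over $i$ produces the self-referential scalar equation $S = A\, S^{(r-d)/(r-1)}$, in which the per-agent structure has been completely absorbed into the single constant $A$.

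From here the proof closes by solving for $S$. Using $1 - (r-d)/(r-1) = (d-1)/(r-1)$, the equation rearranges to $S^{(d-1)/(r-1)} = A$, hence $S = A^{(r-1)/(d-1)}$. Substituting into $f(a(\beta)) = S^{d/r}$ yields $f(a(\beta)) = A^{\frac{d}{r}\cdot\frac{r-1}{d-1}}$, which is the claimed formula once one notes that $\frac{d}{r}\cdot\frac{r-1}{d-1} = \frac{r-1}{r}\cdot\frac{d}{d-1}$ and unpacks $A$.

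The argument is mechanical, so the only real obstacle is exponent bookkeeping: the two delicate points are the simplification of the power of $S$ inside the per-agent term and the consolidation step where $S$ occurs on both sides. Here the standing assumptions $r < 0$ and $d < 1$ are what guarantee $(d-1)/(r-1) \neq 0$ and keep every fractional power well-defined, so no degeneracy blocks the inversion. I would sanity-check the final exponents against the single-agent case $n=1$ (where $f$ reduces to a pure power of $\beta_1$) and against the Cobb-Douglas introductory example as $r \to 0$, to catch any stray reciprocal or sign.
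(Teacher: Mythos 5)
Your proposal is correct and takes essentially the same route as the paper's own proof: compute $\partial_i f$, use the first-order conditions to obtain $k_i a_i^r = (k_i\beta_i^r d^r)^{1/(1-r)}\, S^{(r-d)/(r-1)}$, sum over $i$ to get the scalar equation $S = A\,S^{(r-d)/(r-1)}$, and solve for $S$ before substituting into $f = S^{d/r}$. All exponent manipulations check out, and your write-up is in fact slightly cleaner, since you keep the subscript in $\beta_i$ where the paper's displayed formulas drop it.
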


For the Cobb-Douglass production, the following holds. 
\begin{proposition}
\label{prop:CDform}
Assume $f(a) = (\prod_i a_i^{k_i})$. Then the induced production function
\[
f(a(\beta)) = \left[ \prod_i (k_i \beta_i)^{k_i} \right]^{1 / (1 - \sum_i k_i)}. 
\]
\end{proposition}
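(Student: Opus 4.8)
The plan is to solve the first-order conditions from \Cref{prop:nasheq} directly, exploiting the multiplicative structure of the Cobb-Douglas form. First I would compute the partial derivatives. Writing $f(a) = \prod_j a_j^{k_j}$, logarithmic differentiation gives $\partial_i f(a) = (k_i / a_i) f(a)$. Substituting this into the equilibrium condition $\beta_i \partial_i f(a) = 1$ yields the clean relation
\[
a_i = k_i \beta_i f(a),
\]
which expresses each agent's equilibrium effort as a multiple of the production level itself.

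The second step is to eliminate the $a_i$ by plugging this back into the definition of $f$. This produces a self-referential equation:
\[
f(a) = \prod_i a_i^{k_i} = \prod_i (k_i \beta_i f(a))^{k_i} = \left( \prod_i (k_i \beta_i)^{k_i} \right) f(a)^{\sum_i k_i}.
\]
The key observation is that every factor $f(a)^{k_i}$ collects into a single power $f(a)^{\sum_i k_i}$, so the equation depends on $f(a)$ only through that aggregated exponent. The third step is then purely algebraic: writing $K = \sum_i k_i$ and dividing through by $f(a)^{K}$, I obtain $f(a)^{1-K} = \prod_i (k_i \beta_i)^{k_i}$, and taking the $(1-K)$-th root recovers the claimed
\[
f(a(\beta)) = \left[ \prod_i (k_i \beta_i)^{k_i} \right]^{1/(1 - \sum_i k_i)}.
\]

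I do not expect any serious obstacle; the computation is short once the first-order conditions are in hand. The only points requiring care are (i) invoking \Cref{prop:nasheq} to guarantee that the equilibrium exists, is unique, and genuinely satisfies the first-order conditions (which is where the Inada conditions and decreasing returns to scale, i.e.\ $\sum_i k_i < 1$, enter), and (ii) the sign of the exponent: since $K = \sum_i k_i < 1$ we have $1 - K > 0$, so the root is unambiguous and the resulting function is well-defined and positive. As an alternative consistency check, one could instead derive the same formula as the Cobb-Douglas limit ($r \to 0$) of \Cref{prop:CESform}, but the direct substitution above is cleaner and self-contained.
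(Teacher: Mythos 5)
Your proposal is correct and follows essentially the same route as the paper's own proof: derive $a_i = k_i \beta_i f(a)$ from the first-order conditions, substitute back into the Cobb--Douglas form, and solve the resulting equation $f(a)^{1-\sum_i k_i} = \prod_i (k_i\beta_i)^{k_i}$. Your added remarks on well-posedness (invoking \Cref{prop:nasheq} and noting $\sum_i k_i < 1$ so the exponent $1-\sum_i k_i$ is positive) are sound refinements of the same argument, not a different approach.
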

In fact, for the Cobb-Douglass production, the optimal contract has a very simple closed-form solution. 
\begin{proposition}
\label{prop:CDoptform}
Assume $f(a) = \left( \Pi_i a_i^{k_i} \right)$. The optimal contract is $\beta_i = k_i$. 
\end{proposition}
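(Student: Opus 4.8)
The plan is to work directly with the closed-form expression for the induced production function established in \Cref{prop:CDform} and then maximize the principal's utility via a logarithmic transformation. Writing $K = \sum_i k_i$ (and recalling $K < 1$ for a Cobb-Douglas with decreasing returns to scale, which is exactly what the concavity part of the Inada condition forces), the principal's objective is
\[
U(\beta) = \Big(1 - \sum_i \beta_i\Big)\Big[\prod_i (k_i \beta_i)^{k_i}\Big]^{1/(1-K)},
\]
to be maximized over the open region $\{\beta : \beta_i > 0,\ \sum_i \beta_i < 1\}$. Since $U$ is strictly positive on the interior and vanishes on the boundary (either some $\beta_i \to 0$ or $\sum_i \beta_i \to 1$), any maximizer is interior, so it suffices to locate stationary points.

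First I would pass to logarithms. Discarding the additive constant $\frac{1}{1-K}\sum_i k_i \log k_i$, we obtain
\[
\log U(\beta) = \log\Big(1 - \sum_i \beta_i\Big) + \frac{1}{1-K}\sum_i k_i \log \beta_i + \mathrm{const}.
\]
The key structural observation is that $\log U$ is \emph{concave} in $\beta$ on the (convex) feasible region: the first term is the composition of $\log$ with an affine map, and each summand $k_i \log \beta_i$ is concave because $k_i > 0$ and $1 - K > 0$. Concavity guarantees that any interior stationary point is the unique global maximizer, which reduces the whole problem to solving the first-order conditions.

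Next I would set $\partial_{\beta_j} \log U = 0$ for each $j$, which yields
\[
\frac{1}{1-K}\cdot\frac{k_j}{\beta_j} = \frac{1}{1 - \sum_i \beta_i}.
\]
Rearranged, this says that the ratio $\beta_j / k_j$ equals the common value $(1 - \sum_i \beta_i)/(1-K)$ for every $j$; calling it $\lambda$, we get $\beta_j = \lambda k_j$ and hence $\sum_i \beta_i = \lambda K$. Substituting back gives $\lambda = (1 - \lambda K)/(1 - K)$, whose unique solution is $\lambda = 1$, so $\beta_j = k_j$ as claimed.

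I do not anticipate a serious obstacle; the only step requiring care is the concavity claim, since that is precisely what upgrades the stationarity condition from \emph{necessary} to \emph{globally sufficient} and rules out boundary or saddle solutions. If one preferred to avoid invoking the closed form of \Cref{prop:CDform}, an alternative route would be to substitute the equilibrium relation (using $g_i(a_i) = k_i \log a_i$) and optimize over the effort profile directly, but exploiting the explicit induced production function is by far the cleanest path.
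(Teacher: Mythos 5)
Your proof is correct, but it takes a genuinely different route from the paper. The paper eliminates the contract rather than the effort: it treats the equilibrium first-order conditions $\beta_i \frac{k_i}{a_i} f(a) = 1$ as constraints, solves them for $\beta_i = \frac{a_i}{k_i f(a)}$, and substitutes into the principal's utility to obtain the unconstrained problem $\max_a f(a) - \sum_i \frac{a_i}{k_i}$, which is concave in $a$; its first-order condition $\partial_i f(a) = 1/k_i$, combined with the equilibrium condition $\beta_i \partial_i f(a) = 1$, immediately yields $\beta_i = k_i$ without ever computing the induced production function. You instead eliminate the effort via the closed form of \Cref{prop:CDform} and optimize over $\beta$ directly, using log-concavity of
\[
\log\Big(1 - \sum_i \beta_i\Big) + \frac{1}{1-K}\sum_i k_i \log \beta_i
\]
on the simplex interior. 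Both arguments rest on the same hidden hypothesis $K = \sum_i k_i < 1$ (the paper needs it for concavity of $f(a) - \sum_i a_i/k_i$, you need it for $1/(1-K) > 0$ and for \Cref{prop:CDform} to make sense), and both upgrade stationarity to global optimality via a concavity observation. What your route buys is a more self-contained justification of global optimality: the explicit boundary analysis ($U$ vanishes as some $\beta_i \to 0$ or $\sum_i \beta_i \to 1$, forcing an interior maximizer) plus strict concavity of $\log U$ gives uniqueness, points the paper's proof leaves implicit. What the paper's route buys is independence from \Cref{prop:CDform}: it needs no closed form for $f(a(\beta))$, so the same substitution trick could in principle apply to production functions where the induced production function has no explicit expression. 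Your dependence on \Cref{prop:CDform} is not circular, since that proposition is proved separately from the first-order conditions alone.
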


\section{Finding the Optimal Contract via Reformulation to Convex Programs}
\label{sec:reformulation}
Given the results in the previous section and assuming the induced production function is quasi-concave, the principal's problem can be reformulated as solving a family of convex or quasiconvex programs. 

\subsection{Production-Constrained Convex Program}
Consider the below program. 
\[
\min \sum_i \beta_i, \quad \text{s.t. } f(a(\beta)) \ge k. 
\]
This program returns the minimum possible total share that must be distributed when the production output is required to meet a certain threshold $k$. Further, since $f$ is quasiconcave, the constraint forms a convex set, hence it is a convex program. Denoting the objective as $\MinShare(k)$, the following result should be immediate. 

\begin{proposition}
The optimal utility for the principal is equal to $\sup_{k\in [0, \infty)} (1 - \MinShare(k)) \cdot k$. 
\end{proposition}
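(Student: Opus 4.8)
The plan is to establish the claimed identity by proving two inequalities between the principal's optimal utility, which is the supremum
$$\sup_\beta \left(1 - \sum_i \beta_i\right) f(a(\beta)),$$
and the reformulated quantity $\sup_{k \ge 0} (1 - \MinShare(k))\, k$. The conceptual idea is a weak-duality-style bucketing: group every contract $\beta$ according to the production level $k = f(a(\beta))$ it induces, and note that among all contracts reaching a given production level, the program defining $\MinShare(k)$ selects the one with the cheapest total share.

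For the first direction I would take an arbitrary contract $\beta$ and set $k := f(a(\beta)) \ge 0$. Then $\beta$ is feasible for the program $\min \sum_i \beta_i$ subject to $f(a(\beta)) \ge k$, so by definition $\MinShare(k) \le \sum_i \beta_i$, and hence $1 - \sum_i \beta_i \le 1 - \MinShare(k)$. Multiplying this comparison by the nonnegative number $k$ and recalling $f(a(\beta)) = k$ gives $(1 - \sum_i \beta_i) f(a(\beta)) \le (1 - \MinShare(k))\, k \le \sup_{k \ge 0}(1 - \MinShare(k))\, k$. Taking the supremum over $\beta$ yields the first inequality.

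For the reverse direction I would fix any $k \ge 0$ and let $\beta^\star$ attain $\MinShare(k)$, so that $\sum_i \beta^\star_i = \MinShare(k)$ and $f(a(\beta^\star)) \ge k$. The principal's utility at $\beta^\star$ equals $(1 - \MinShare(k))\, f(a(\beta^\star))$, and I would split on the sign of $1 - \MinShare(k)$. If $\MinShare(k) \le 1$, then multiplying $f(a(\beta^\star)) \ge k$ by the nonnegative factor $1 - \MinShare(k)$ shows the utility at $\beta^\star$ is at least $(1 - \MinShare(k))\, k$, which is therefore at most the principal's optimal utility. If instead $\MinShare(k) > 1$, then $(1 - \MinShare(k))\, k \le 0$, and since the principal can always secure strictly positive utility (e.g.\ at a contract with small positive shares, where the production is positive and $1 - \sum_i \beta_i$ is close to $1$), this nonpositive term is again dominated. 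Thus $(1 - \MinShare(k))\, k$ is bounded above by the optimal utility for every $k$, and taking the supremum over $k$ gives the reverse inequality.

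The main obstacle, and the reason the two directions are not perfectly symmetric, is the sign-dependence of the scaling step: multiplying the share comparison by $k$ (resp.\ by $1 - \MinShare(k)$) preserves the inequality's direction only when the multiplier is nonnegative, which forces the case split in the reverse direction and a check that using the production \emph{inequality} $f(a(\beta^\star)) \ge k$ rather than an equality causes no loss precisely when $\MinShare(k) \le 1$. The remaining points are minor: one should confirm that the minimizer $\beta^\star$ is attained so $\MinShare(k)$ is well-defined (which follows from the convexity of the feasible set established earlier), and one should read both sides as suprema so that the equality is valid even if neither extremum is achieved.
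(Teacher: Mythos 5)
Your proof is correct and is precisely the two-inequality bucketing argument that the paper treats as immediate---the paper states this proposition without any explicit proof, so your write-up simply supplies the omitted details (pass from any $\beta$ to $k = f(a(\beta))$ for one direction, and from any $k$ to a minimizer of the $\MinShare(k)$ program for the other, with the sign-aware case split). The only blemish is your justification for the attainment of $\beta^\star$: convexity of the feasible set does not by itself yield a minimizer; what does is that the level set $\{\beta : f(a(\beta)) \ge k\}$ is closed and the objective $\sum_i \beta_i$ is coercive on the nonnegative orthant, or one can sidestep attainment altogether by running your argument with $\eps$-approximate minimizers and letting $\eps \to 0$---either fix is routine and does not affect the validity of the argument.
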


\begin{remark}
Assuming the principal has a first-order oracle to $f(a(\beta))$, the principal essentially has a separation oracle for the constraint set $f(a(\beta)) \ge k$. Hence, the above program can be then solved via algorithms that only require separation oracles, such as the ellipsoid method (see e.g. \cite{grotschel2012geometric}). 
\end{remark}

\begin{remark}
The above program bear some resemblence to the concept of the cost function in the theory of production in economics. Indeed, if the effort of each agent were observable and verifiable, the first-best solution is to compensate each agent for exactly the amount of effort he spent. Then, the cost function is defined as the minimum cost (i.e., total compensation to the agents) when the production must meet a certain threshold $k$:
\[
\min \sum_i a_i, \quad \text{s.t. } f(a) \ge k. 
\]
However, in our setting with moral hazard, the principal can only induce agents' efforts indirectly through contracts. Therefore, the reformulation can be seen as the cost function in the ``second-best" setting. The ``second-best" setting refers to the situation where the agents' preference are not known and incentive compatibility constraints must be imposed. 

\end{remark}

\subsection{Share-Constrained Quasiconvex Program}
Consider the below quasiconvex program. 
\[
\max f(a(\beta)), \quad \text{s.t.} \sum_i \beta_i \le k. 
\]
The program rturns the maximum production that can be achieved when the total share distributed to the agent is limited to $k$. Denoting the objective as $\MaxProd(k)$, the following proposition should be immediate. 
\begin{proposition}
The optimal utility for the principal is equal to $\sup_{k\in [0, 1]} (1 - k) \cdot \MaxProd(k)$. 
\end{proposition}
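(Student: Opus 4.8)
The plan is to prove this by the same stratification argument that underlies the production-constrained reformulation, only now slicing the feasible contracts by their total distributed share rather than by their production level. Write the principal's optimal utility as $\OPT = \sup_{\beta} (1 - \sum_i \beta_i)\, f(a(\beta))$, where the supremum ranges over feasible contracts satisfying $\beta_i \ge 0$ and $\sum_i \beta_i \le 1$ (any contract with $\sum_i \beta_i > 1$ yields non-positive utility and may be discarded, which is consistent with restricting $k$ to $[0,1]$). I would then establish the two inequalities $\OPT \le \sup_{k\in[0,1]} (1-k)\MaxProd(k)$ and $\OPT \ge \sup_{k\in[0,1]} (1-k)\MaxProd(k)$ separately. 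For the upper bound, take an arbitrary feasible $\beta$ and set $k = \sum_i \beta_i \in [0,1]$; then $\beta$ is feasible for the program defining $\MaxProd(k)$, so $f(a(\beta)) \le \MaxProd(k)$, and since $1-k \ge 0$ we get $(1 - \sum_i \beta_i)\, f(a(\beta)) = (1-k)\, f(a(\beta)) \le (1-k)\MaxProd(k)$. Taking the supremum over $\beta$ yields the first inequality.

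For the lower bound, fix $k \in [0,1]$ and $\eps > 0$, and choose a contract $\beta$ feasible for the $\MaxProd(k)$ program (so $\sum_i \beta_i \le k$) with $f(a(\beta)) \ge \MaxProd(k) - \eps$. Because $\sum_i \beta_i \le k$ we have $1 - \sum_i \beta_i \ge 1 - k \ge 0$, and because production is nonnegative, the principal's utility at $\beta$ satisfies $(1 - \sum_i \beta_i)\, f(a(\beta)) \ge (1-k)\bigl(\MaxProd(k) - \eps\bigr)$. Hence $\OPT \ge (1-k)(\MaxProd(k) - \eps)$; letting $\eps \to 0$ and then taking the supremum over $k$ gives the reverse inequality. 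Combining the two directions proves the identity. I would use this $\eps$-approximate-maximizer formulation throughout so as to avoid assuming the supremum defining $\MaxProd(k)$ is actually attained.

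The step demanding the most care, and the conceptual core of the statement, is the lower bound. The issue is that the (near-)maximizer of $\MaxProd(k)$ need not exhaust the budget, i.e. it may have $\sum_i \beta_i < k$ strictly, so a priori one cannot read off the principal's utility as $(1-k)\MaxProd(k)$. This slack is harmless precisely because production is nonnegative and the share multiplier $1 - \sum_i \beta_i$ is decreasing in the total share: spending strictly less than $k$ can only improve the principal's realized utility relative to the bound $(1-k)\MaxProd(k)$. Equivalently, one may observe that $\MaxProd$ is nondecreasing in $k$ and that $\MaxProd(k)$ equals the supremum of maximal production over all total shares $k' \le k$; the inequality $1 - k' \ge 1 - k \ge 0$ then absorbs the slack. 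Either phrasing closes the argument, and I would present the direct $\eps$-argument above as the cleanest route.
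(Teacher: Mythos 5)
Your proof is correct and is exactly the stratification-by-total-share argument that the paper treats as immediate (it states this proposition without proof), with the $\eps$-approximate-maximizer device and the observation that budget slack only helps the principal supplying the details the paper omits. No gaps; the handling of contracts with $\sum_i \beta_i > 1$ via nonnegativity of production is also consistent with the paper's setting.
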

\begin{remark}
In the above program, the objective is a quasi-convex function and the constraint is clearly a convex set. The principal can use algorithms for constrained quasi-convex programs, such as the projected (normalized) gradient descent method or the Frank-Wolfe method. These methods are originally proposed for convex optimization (for a textbook treatment see e.g. \cite{bubeck2015convex}), however their convergence has also been analyzed for quasiconvex optimization (e.g. \cite{hazan2015beyond}, \cite{lacoste2016convergence}). 
\end{remark}
\begin{remark}
The above program bear some resemblence to the concept of the \emph{indirect production function} in economics. Specifically, in the first-best solution, the indirect production function captures the optimal production possible when the budget is restricted to some quantity $k$:
\[
\max_a f(a), \quad \text{s.t. } \sum_i a_i \le k. 
\]
However, similar to the previous reformulation, in our setting with moral hazard, the principal cannot directly work with agents' response. Therefore, the reformulation here can be seen as the indirect production function in a ``second-best" setting. 
\end{remark}

\section{Discussion}

This work studied computationally efficient optimization algorithms in a principal-agent team production setting. The problem studied in this work is most related to the recent works \cite{dutting2023multi, duetting2024multi}. However, these prior works study algorithms in a purely combinatorial setting (i.e., the action space of the agent is binary or discrete, and the outcome space is also binary or discrete). By contrast, in this work, the effort space of each agent is assumed to be a continuous interval, which is quite different from these prior works.

We note that the model used in this work is more aligned with standard production models commonly studied in economics and is capable of capturing widely used production functions, such as those in the CES family. In this study, we introduce the concept of an `induced production function' and demonstrate that the principal's optimization problem can be reformulated as a family of convex programs. These reformulations share conceptual similarities with the established notions of cost functions and indirect production functions in economics. We recognize that there remain several avenues for further improvement and refinement of the current approach, which we outline below.



\paragraph{Milder Assumptions Guaranteeing Quasiconcavity}
This work identified a technical condition that guarantees the quasiconcavity of $f$ as a function of the contract. It would be interesting to see whether the given condition can be relaxed.

\paragraph{Other Contract Schemes} This work studied the use of linear contracts in a team production setting. There are other interesting contract schemes, for example, rank-order tournaments (\cite{lazear1981rank}), relative performance outcomes (\cite{holmstrom1982moral}), etc. It would be interesting to formulate tractable mathematical models for which the optimal contract can be computed efficiently for these contract schemes.

\newpage
\bibliographystyle{apalike}  
\bibliography{references}  

\newpage
\appendix
\section{Missing Proofs}
\subsection{Uniqueness and existence of Nash equilibirum}
\proof [Proof of \Cref{prop:nasheq}]
Without loss of generality assume $\beta > 0$. We need to show the equation
\begin{align*}
\nabla f(a) = (1 / \beta_1, \dots, 1/\beta_n)
\end{align*}
has a unique solution. Uniqueness is implied by the strict concavity of $f$. Existence is implied the the Poincare-Miranda theorem with the Inada conditions. 
\endproof

\subsection{Form of induced production function in CES class}


\begin{proof}[Proof of \Cref{prop:CESform}]
In this proof, as a shorthand, denote $t = \sum_i k_i a_i^r$. 
\begin{align*}
\partial_{a_i} f(a) &= \frac{d}{r}\cdot (\sum_i k_i a_i^r)^{d/r - 1} \cdot r \cdot (k_i a_i^{r-1}) \\
&= d\cdot t^{d/r - 1} k_i a_i^{r-1}
\end{align*}
Therefore,
\begin{align*}
a_i^{r-1} &= (d k_i \beta_i)^{-1} t^{1 - d/r}  \\
\Rightarrow k_i a_i^{r} &= k_i^{1 / (1-r)} (\beta d)^{(r/(1-r))} t^{(r-d) / (r-1)}
\end{align*}
Summing over all $i$:
\begin{align*}
t = t^{(r -d) / (r-1)} \sum_i [k_i \beta^r d^r]^{1/(1-r)} 
\end{align*}
which leads to
\[
t^{(d-1) / (r-1)} = \sum_i [k_i \beta^r d^r]^{1/(1-r)} 
\]
So
\[
f(a(\beta)) = t^{d/r} = \left[\sum_i (k_i \beta^r d^r)^{1/(1-r)} \right]^{\frac{r-1}{r} \cdot\frac{d}{d-1}}. 
\]
\end{proof}

\begin{proof} [Proof of \Cref{prop:CDform}]
By the first-order conditions,
\begin{align*}
\beta_i f(a)\cdot \frac{k_i}{a_i} = 1
\end{align*}
Rewriting the above
\[
a_i = \beta_i k_i f(a)
\]
Taking both sides to the $k_i$-th power and taking the product over all $i$, we arrive at
\[
f(a) = \prod a_i^{k_i} = \prod_i (\beta_i k_i f(a))^{k_i}
\]
Rearranging terms
\[
f(a)^{1 - \sum_i {k_i}} = \prod_i (\beta_i k_i)^{k_i}
\]
Therefore
\[
f(a) = \left[ \prod_i (\beta_i k_i)^{k_i} \right]^{1 / (1 - \sum_i k_i)}. 
\]
\end{proof}

\begin{proof} [Proof of \Cref{prop:CDoptform}]
The principal's program can then be written as
\begin{align*}
&\max_{\beta, a} (1 - \sum_i \beta_i) f(a) \\
\text{s.t. } & \beta_i \cdot \frac{k_i}{a_i} \cdot \prod_i{a_i}^{k_i} = 1
\end{align*}
Therefore 
\[
\beta_i = \frac{a_i}{k_i f(a)}. 
\]
Substituting in the objective function, the objective becomes
\[
\max_a f(a) - \sum_i \frac{a_i}{k_i}. 
\]
This is a concave function with respect to $a$. Taking the derivative with respect to $a_i$, the first-order condition is
\[
\frac{k_i}{a_i} \cdot \prod_i a_i^{k_i}  = \frac{1}{k_i}. 
\]
Substituting into the equilibrium's first-order conditions:
\[
\beta_i = k_i. \qedhere
\]
\end{proof}


\end{document}